\newtheorem{theorem}{Theorem}
\newtheorem{lemma}{Lemma}
\newtheorem{corollary}{Corollary}
\newtheorem{definition}{Definition}
\newtheorem{reduction}{Reduction}
\newcommand{\Version}{\mathcal{R}}
\newcommand{\Overwrites}{\mathcal{O}}
\newcommand{\Tree}{\mathcal{T}}
\newcommand{\VersionTree}{\mathcal{R}}
\newcolumntype{s}{>{\columncolor[HTML]{AAACED}} p{3cm}}
\title{Further Unifying the Landscape of Cell Probe Lower Bounds}
\author{Kasper Green Larsen\thanks{Computer Science Department. Aarhus University. \texttt{larsen@cs.au.dk}. Supported by a Villum Young Investigator Grant, a DFF Sapere Aude Research Leader Grant and an AUFF Starting Grant.} \and 
Jonathan Lindegaard Starup\thanks{Computer Science Department. Aarhus University. \texttt{jls@cs.au.dk}} 
\and 
Jesper Steensgaard\thanks{Computer Science Department. Aarhus University. \texttt{steensgaard@cs.au.dk}. }
}
\date{}
\begin{document}
\maketitle

\begin{abstract}
In a landmark paper, P{\v{a}}tra{\c{s}}cu demonstrated how a single lower bound for the static data structure problem of reachability in the butterfly graph, could be used to derive a wealth of new and previous lower bounds via reductions. These lower bounds are tight for numerous static data structure problems. Moreover, he also showed that reachability in the butterfly graph reduces to dynamic marked ancestor, a classic problem used to prove lower bounds for dynamic data structures. Unfortunately, P{\v{a}}tra{\c{s}}cu's reduction to marked ancestor loses a $\lg \lg n$ factor and therefore falls short of fully recovering all the previous dynamic data structure lower bounds that follow from marked ancestor. In this paper, we revisit P{\v{a}}tra{\c{s}}cu's work and give a new lossless reduction to dynamic marked ancestor, thereby establishing reachability in the butterfly graph as a single seed problem from which a range of tight static and dynamic data structure lower bounds follow.
\end{abstract}

\section{Introduction}Proving data structure lower bounds in Yao's cell probe model~\cite{yao:cellprobe} has been an active and important line of research for decades. A data structure in the cell probe model consists of a random access memory, divided into cells of $w$ bits. When answering queries or performing updates, cell probe data structures are only charged for the number of cell accesses (probes) performed. That is, all computation time is for free, and the query and update time is defined solely in terms of the number of probes performed. The cell probe model is very powerful and thus lower bounds for cell probe data structures in particular apply to data structures developed in the standard upper bound model, the word-RAM. Numerous techniques for proving cell probe lower bounds have been developed over the years, ranging from simple reductions from asymmetric communication complexity~\cite{milt:asym}, to tricky round-elimination based techniques~\cite{milt:asym,patrascu06pred,patrascu07randpred,towardsrr}, elegant cell sampling proofs~\cite{pani:metric,larsen:staticloga}, chronograms~\cite{Fredman:chrono}, four-party communication games~\cite{weinstein:fourparty} and combinations of the previous~\cite{larsen:dynamic_count} topped with properties of Chebyshev polynomials and the approximate degree of the AND function~\cite{larsen:crossing}. The range of techniques shows the depth of the field, but may also be intimidating for new researchers that consider entering the field.

\paragraph{Unifying the Field.}
In one of the most beautiful papers on data structure lower bounds~\cite{p2011unifying}, P{\v{a}}tra{\c{s}}cu addressed the aforementioned issue by giving a clean and unified proof of many of the known lower bounds. In his work, P{\v{a}}tra{\c{s}}cu starts at a simple asymmetric communication game termed Lopsided Set-Disjointness (LSD) for which a communication lower bound was already known from earlier work of Miltersen et al.~\cite{milt:asym}. He then proceeds to give a reduction from LSD to reachability data structures in a special graph known as the butterfly graph. In this problem, the input is a directed acyclic graph $G$ with $n$ nodes and $m$ edges. The reduction establishes a lower bound of $t = \Omega(\lg n /\lg w)$ for static data structures that supports reachability queries in $G$ using $m \lg^{O(1)} n$ space using memory cells of $w \geq \lg n$ bits. Here $t$ is the query time and a reachability query is given two nodes $u$ and $v$ in $G$ and must determine whether there is a directed path from $u$ to $v$.

The special structure of the butterfly graph then allowed P{\v{a}}tra{\c{s}}cu to give reductions to numerous classic data structure problems such as 2D range counting, 2D rectangle stabbing and 4D range reporting. These reductions give similar $\lg n/ \lg w$ lower bounds for all these problems while avoiding the heavy machinery often involved in proving lower bounds from scratch. A number of other papers have since then used P{\v{a}}tra{\c{s}}cu's framework and given reductions from either LSD or reachability in butterfly graphs to problems such as 2D skyline counting queries~\cite{brodal:skyline}, approximate distance oracles~\cite{verbin:dist} and range mode queries~\cite{greve:mode}. 

P{\v{a}}tra{\c{s}}cu's work thus reduces all the heavy-lifting involved in proving lower bounds to one initial seed lower bound for LSD and then the rest are reductions, which are familiar to all theoretical computer scientists. Moreover, the initial communication lower bound proof for LSD by Miltersen et al.~\cite{milt:asym} is only a few paragraphs long and easy to grasp if one seeks to understand the whole trail of arguments leading to the lower bounds.

\paragraph{Dynamic Data Structures.}
Common to all of the problems mentioned above is that they are \emph{static} data structure problems. In a static problem, the input is given once and for all and must be preprocessed into a data structure to support queries. In contrast, in a \emph{dynamic} data structure problem, one needs to also support updates to the data. In a classic work, Alstrup et al.~\cite{alstrup1998marked} took an approach similar to P{\v{a}}tra{\c{s}}cu by proving a lower bound for dynamic marked ancestor data structures and then giving reductions to other problems. In the marked ancestor problem, the input is a rooted tree with $n$ nodes. Updates may either mark or unmark the nodes in the tree. A query is specified by a node $u$ in the tree and the goal is to answer whether $u$ has a marked ancestor. Alstrup et al. proved that any data structure for marked ancestor must satisfy $t_q = \Omega(\lg n/ \lg(t_u w \lg n))$, where $t_q$ is the query time and $t_u$ the update time. While the marked ancestor problem may seem abstract, Alstrup et al.~\cite{alstrup1998marked} gave reductions from marked ancestor to numerous dynamic data structure problems such as dynamic 2D range emptiness, union-find and dynamic connectivity, thereby establishing similar lower bounds for all these problems. See Alstrup et al.~\cite{alstrup1998marked} for further details.

In P{\v{a}}tra{\c{s}}cu's unifying work~\cite{p2011unifying}, he also demonstrated that his static lower bound for reachability in the butterfly graph \emph{almost} implies the marked ancestor lower bound by Alstrup et al.~\cite{alstrup1998marked}. That is, he gave a reduction from reachability in the butterfly graph to dynamic marked ancestor. His reduction establishes a $t_q = \Omega(\lg n/(\lg(t_u w \lg n) \lg \lg n))$ lower bound for marked ancestor, and thus also for the whole range of dynamic problems where we already had reductions from marked ancestor. Thus, except for a $\lg \lg n$ factor, the hardness of reachability in the butterfly graph single-handily explains the hardness of a wealth of static and dynamic data structure problems. If only the reduction did not lose a $\lg \lg n$ factor!

\paragraph{Our Contribution.}
In this work, we complete P{\v{a}}tra{\c{s}}cu's unifying work by demonstrating a lossless reduction from reachability in the butterfly graph to dynamic marked ancestor, thereby establishing the tight $t_q = \Omega(\lg n/\lg(t_u w \lg n))$ lower bound for dynamic marked ancestor. Thus one single lower bound suffices as a seed for the whole range of reductions among static and dynamic data structure problems. See Figure~\ref{fig:reduc_overview} for an overview of some of the lower bounds that follow from this reduction to dynamic marked ancestor.
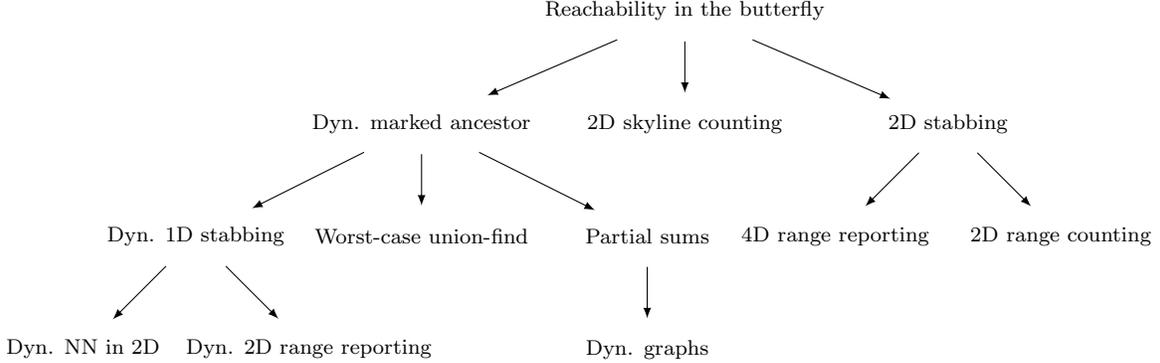
\begin{figure*}
\begin{tikzpicture}[
level distance=1.5cm,
level 1/.style={sibling distance=3.5cm},
level 2/.style={sibling distance=3cm},
level 3/.style={sibling distance=3cm},
edge from parent/.style={draw,-latex}]
\tikzstyle{every node}=[ellipse,draw,
minimum height=0.8cm,minimum width=1.5cm]
\node(Root)[draw=none]{\footnotesize Reachability in the butterfly}
child{node[draw=none]{\footnotesize Dyn. marked ancestor} 
child{node[draw=none]{\footnotesize Dyn. 1D stabbing}
  child{node[draw=none]{\footnotesize Dyn. NN in 2D}}
  child{node[draw=none]{\footnotesize Dyn. 2D range reporting}}
}
child{node[draw=none]{\footnotesize Worst-case union-find}}
child{node[draw=none]{\footnotesize Partial sums}
child{node[draw=none]{\footnotesize Dyn. graphs}}
}
}
child{node[draw=none]{\footnotesize 2D skyline counting}}
child{node[draw=none]{\footnotesize 2D stabbing}
child{node[draw=none]{\footnotesize 4D range reporting}}
child{node[draw=none]{\footnotesize 2D range counting}}};
\end{tikzpicture}
\caption{Some of the lower bounds obtained via reductions from reachability in the butterfly graph. The lower bound for dynamic marked ancestor, and all problems in its subtree, become $t_q = \Omega(\lg n/\lg(t_u w \lg n))$ where $t_q$ is the worst case query time and $t_u$ the worst case update time. For references to original papers containing these reductions, we refer the reader to P{\v{a}}tra{\c{s}}cu~\cite{p2011unifying}.}
 \label{fig:reduc_overview}
\end{figure*}

\subsection{Proof Overview}
In the following, we sketch the key idea underlying our reduction from reachability in the butterfly graph to dynamic marked ancestor. To explain it, we start by sketching P{\v{a}}tra{\c{s}}cu's original reduction. The basic idea in P{\v{a}}tra{\c{s}}cu's work, which loses a $\lg \lg n$ factor, is to take a data structure for dynamic marked ancestor and apply the classic technique of \emph{full persistence} to it. A fully persistent data structure, is a dynamic data structure, along with a rooted version tree. Each node of the version tree contains a sequence of updates. A query to the data structure is also given a node $v$ of the version tree, and the data structure must answer the query as if precisely the updates on the path from the root to $v$ had been performed. P{\v{a}}tra{\c{s}}cu argued that \emph{any} dynamic data structure can be made fully persistent at the cost of a $\lg \lg n$ factor in query time, while using space $O(n t_u)$ when the version tree contains a total of $n$ nodes. Secondly, P{\v{a}}tra{\c{s}}cu showed that a fully persistent data structure for dynamic marked ancestor may be used to solve reachability in the butterfly graph. It is thus the application of full persistence that causes the $\lg \lg n$ factor loss in the lower bound.

\paragraph{Non-Determinism.}
Our key idea is to use a slightly stronger lower bound for reachability in the butterfly graph as a starting point. Interesting work by Yin~\cite{yin2010cell} and by Wang and Yin~\cite{wang2014certificates} studied the notion of \emph{non-deterministic} data structures. Non-deterministic data structures may essentially \emph{guess} which memory cells to read when answering a query. More formally, one can think of a prover that specifies a set of memory cells to a verifier. The verifier receives that set of memory cells and must either output the correct answer to the query, or reject the set of cells. Similarly to the complexity class NP, we require that there must be a set of cells resulting in the verifier answering the query, i.e. there must exist a \emph{certificate}. Wang and Yin~\cite{wang2014certificates} showed that P{\v{a}}tra{\c{s}}cu's lower bound for reachability in butterfly graphs also holds for non-deterministic data structures. 

While lower bounds for non-deterministic data structures may appear rather abstract, we elegantly exploit precisely this non-determinism to avoid the $\lg \lg n$ loss in P{\v{a}}tra{\c{s}}cu's reduction to marked ancestor. More concretely, we show that any dynamic data structure can be made fully persistent at an $O(1)$ factor increase in query time if we at the same time make it non-deterministic. In light of Wang and Yin's lower bound for non-deterministic reachability in the butterfly graph, this is perfectly fine and we obtain the tight lower bound for dynamic marked ancestor.

\section{Certificates for Fully Persistent Data Structures}In this section, we formally define non-deterministic data structures, or equivalently, certificates in data structures. We then demonstrate how to make any dynamic data structure fully persistent using non-determinism.

Let $\mathcal{D}$ be a set of databases, $\mathcal{Q}$ a set of queries, and $\mathcal{Z}$ a set of results. A function $f : \mathcal{Q} \times \mathcal{D} \to \mathcal{Z}$ specifies a \emph{data structure problem}. The result of a query $q\in \mathcal{Q}$ on database $d \in \mathcal{D}$ is $f(q, d)$. The problem $f$ has $(s, w, t)$-certificates if there is a code $T : \mathcal{D} \to (\{0,1\}^w)^s$ such that any query on $d \in \mathcal{D}$ can be answered from some $t$ elements of the tuple $T(d)$. We think of $T(d)$ as a \emph{table} of $s$ \emph{cells} each of which has $w$ bits. We denote by $T_d(i)$ the $i$'th cell of $T(d)$ and for $P = \{ i_1, ..., i_k \}$ we denote by $T_d(P)$ the sequence $(i_1, T_d(i_1)), ..., (i_k, T_d(i_k))$. More formally, we have:

\begin{definition}
\label{def:certificate}
A data structure problem $f : \mathcal{Q} \times \mathcal{D} \to \mathcal{Z}$ has $(s, w, t)$-certificates if for some code $T: \mathcal{D} \to (\{0,1\}^w)^s$ there exists a verifier $V$ such that for all $q \in \mathcal{Q}$, $d \in \mathcal{D}$, and $P \subseteq \{ 1, ..., s\}$ with $|P|=t$ either $V(q, T_d(P)) = f(q,d)$ or $V(q, T_d(P)) = \bot$, and for at least one such $P$ we have $V(q, T_d(P)) = f(q, d)$. The symbol $\bot \notin \mathcal{Z}$ indicates verification failure.
\end{definition}

A data structure problem having $(s, w, t)$-certificates corresponds to the existence of a non-deterministic data structure using space $s$ cells of $w$ bits, that can answer any query by guessing $t$ cells to look at. We can use this fact to show the existence of certificates by describing a non-deterministic data structure. We remark that lower bounds for $(s, w, t)$-certificates also hold for deterministic data structures as one can obtain a verifier from a deterministic data structure by simply running the query algorithm of the data structure, and if it ever requests a cell not in $P$, we return $\bot$.

To illustrate the power of non-determinism and certificates, and the key to our improved reduction, let us define the rank problem. The \textit{\textbf{rank problem}} for some universe $[U]$, is the data structure problem defined by the function $f : [U] \times 2^{[U]} \to \mathbb{N}$ where $f(x, S) = |\{s \in S : s \le x \}|$ . Given a set of integers $S$ with $|S|=n$ (the database) and an integer $x$ (the query), we are interested in the \emph{rank} of $x$ in $S$, i.e., the number of elements in $S$ that are not greater than $x$. The rank problem is at least as hard as \emph{predecessor search}, and thus has a lower bound of $t = \Omega(\min\{\lg \lg U, \lg_w n\})$ due to the work by P{\v{a}}tra{\c{s}}cu and Thorup~\cite{patrascu06pred,patrascu07randpred}. Moreover, in P{\v{a}}tra{\c{s}}cu's reduction from reachability in the butterfly graph to dynamic marked ancestor, his application of full persistence needs to solve predecessor search on a universe of size $U = n$ and thus costs $\Omega(\lg \lg n)$. We avoid this by demonstrating that the rank problem can be solved much more efficiently if we allow non-determinism:

\begin{lemma}
\label{prop:rank_certificate}
The rank problem has $(n, w, 2)$-certificates for any $w > \lg U$.
\end{lemma}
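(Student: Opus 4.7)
My plan is to use the sorted sequence of elements of $S$ as the certificate table: set $T_S(i) = s_i$, where $s_1 < s_2 < \cdots < s_n$ denote the elements of $S$ in increasing order. Each cell stores one element of $[U]$, which fits in $w$ bits since $w > \lg U$, and in total we use exactly $n$ cells as required. The central observation driving the proof is that the rank of a query $x$ is pinned down by two neighbouring sorted values: if the true rank is some $r$ with $1 \le r \le n-1$, then $s_r \le x < s_{r+1}$, a condition that can be verified from just the two cells indexed $r$ and $r+1$.

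Given this encoding, I would describe the non-deterministic prover and verifier as follows. For an interior rank $r \in \{1, \dots, n-1\}$, the prover guesses $P = \{r, r+1\}$. For the boundary rank $r = 0$, the prover guesses $P = \{1, 2\}$ so that the value $v_1 = s_1$ witnesses $x < s_1$; for $r = n$, the prover guesses $P = \{n-1, n\}$ so that $v_n = s_n$ witnesses $s_n \le x$. The verifier, on input $x$ together with two cells $(i, v_i), (j, v_j)$ with $i < j$, outputs $i$ if $j = i + 1$ and $v_i \le x < v_j$; otherwise outputs $0$ if $i = 1$ and $x < v_i$; otherwise outputs $n$ if $j = n$ and $v_j \le x$; and otherwise returns $\bot$.

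The only part that needs careful checking is soundness of the verifier against an adversarial $P$. Since the stored sequence is strictly increasing, the predicate $v_i \le x < v_{i+1}$ holds for at most one consecutive pair of indices, and that $i$ must then equal the true rank; likewise, $x < v_1$ forces the true rank to be $0$, and $v_n \le x$ forces it to be $n$. Hence every non-$\bot$ output of the verifier is the correct value of $f(x, S)$. Completeness is covered by the prover strategy above, which supplies a valid $P$ for every possible rank $r \in \{0, 1, \dots, n\}$. I do not expect any genuine obstacle; the whole lemma hinges on the single idea that, unlike deterministic data structures that must \emph{find} the two adjacent sorted positions (and hence pay $\Omega(\lg \lg U)$ for predecessor search), a certificate can simply be \emph{given} those two positions and check a pair of inequalities.
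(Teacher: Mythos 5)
Your proof matches the paper's almost verbatim: same sorted table $T_S(i) = s_i$, same verifier checking $j = i+1$ and $v_i \le x < v_j$ with the two boundary cases for rank $0$ and rank $n$, and the same count of $n$ cells of $w$ bits with $t = 2$. The added remarks on soundness (uniqueness of the consecutive pair satisfying the inequalities, since the sequence is strictly increasing) are correct and only make explicit what the paper leaves implicit.
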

\begin{proof}
Given $S \subseteq [U]$ we store a sorted list of the elements in $S$. More precisely, we store the table $T_S$ whose $i$'th entry is the $i$'th smallest element of $S$. This table makes it possible to verify the rank of a query $x$ by looking at the element less than and greater than $x$: We define a verifier $V$, such that for any $x \in [U]$ and $P\subseteq \{ 1, ..., s\}$ where $|P|=2$ and $T_S(P) = (i, \ell), (j, u)$, we let $V(x, T_S(P)) = i$ if $i+1 = j$ and $\ell \le x < u$. If $i=1$ and $x < \ell$ let $V(x, T_S(P)) = 0$. If $j=n$ and $x \ge u$ let $V(x, T_S(P))=n$. Otherwise, $V(x, T_S(P)) = \bot$. The table $T_S$ has $n$ entries, each with $w$ bits and the verifier needs $t=2$ cells to answer the query, i.e. the rank problem has $(n,w,2)$-certificates.
\end{proof}

As mentioned earlier, we use efficient certificates for the rank problem to make any dynamic data structure fully persistent. We first formally define the (static) fully persistent version of a data structure problem.
 
\begin{definition}
Given some dynamic problem, its static, fully persistent version is a data structure problem on a rooted tree. Every node has a sequence of update operations. Queries are pairs $(q, u)$ and the answer is the result of executing $q$ after the sequence of updates found on the path from the root to $u$.
\end{definition}

We call this tree the \emph{version tree}. A solution to a dynamic problem implies certificates for its static, fully persistent version. Specifically,

\begin{theorem}
\label{thm:fully_persistent_certificate}
If the dynamic version of a problem has a solution with update time $t_u$ and query time $t_q$, then the static, fully persistent version with $m$ updates has $(s, w, t)$-certificates for $s = O(m \cdot t_u)$ and $t = O(t_q)$ provided that $w = \Omega(\lg m)$.
\end{theorem}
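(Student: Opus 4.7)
The plan is to build the certificates by an offline simulation of the dynamic data structure along a depth-first traversal of the version tree, and then reduce every cell probe performed by the dynamic query algorithm to one predecessor lookup in a sorted list -- a lookup that, by (an immediate adaptation of) Lemma~\ref{prop:rank_certificate}, costs only $2$ probes in the non-deterministic setting. This is the same high-level recipe as P{\v{a}}tra{\c{s}}cu's reduction, but with the rank certificate replacing his $\Theta(\lg \lg n)$-probe persistent predecessor structure with an $O(1)$-probe certificate.

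First I would DFS the version tree while maintaining a single mutable copy of the dynamic data structure: on entering a version $u$, apply the updates that $u$ specifies; on leaving $u$ after visiting its subtree, undo those updates so that the simulated state once again agrees with that of $u$'s parent. I would attach a global timestamp to every cell write performed during this traversal, counting both the forward writes on entry and the restoring writes on exit, and let $in(v)$ denote the timestamp immediately after $v$'s entry writes. For each memory cell $c$ of the simulated data structure, I would then collect every event that wrote to $c$ and lay the events down as a contiguous sorted array $A_c$ inside the certificate table $T$, each entry storing its timestamp together with the value installed in $c$. Since each of the $\le m$ updates produces at most $t_u$ forward writes and each is matched by one undo event, the total number of events is $O(m t_u)$, so $s = O(m t_u)$ suffices; with $w = \Omega(\lg m)$ each timestamp-and-value pair fits in one cell. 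Alongside this I would tabulate the starting offset of every $A_c$ and the value $in(v)$ for every version $v$; these metadata take $O(m t_u)$ additional cells since at most $O(m t_u)$ distinct cells of the dynamic data structure are ever touched.

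On a query $(q, v)$ the verifier reruns the dynamic query algorithm against version $v$; whenever the algorithm asks to probe a cell $c$, the verifier applies the rank certificate of Lemma~\ref{prop:rank_certificate} on $A_c$ with key $in(v)$. Concretely, the prover supplies two consecutive entries of $A_c$ bracketing $in(v)$; the verifier checks that their indices inside $A_c$ differ by one and that their timestamps satisfy $t_1 \le in(v) < t_2$, and returns the value field of the earlier entry as the contents of $c$ (with the obvious boundary conventions when $in(v)$ precedes or follows every event, and $\bot$ on any failed check). Each simulated dynamic probe therefore consumes $O(1)$ probes of $T$, so the overall probe count is $t = O(t_q)$.

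The main obstacle will be justifying that the predecessor of $in(v)$ inside $A_c$ indeed stores the persistent value of $c$ at version $v$. This is the standard invariant of a DFS-with-undo offline simulation: because sibling subtrees are processed disjointly and are fully rolled back on exit, the simulated value of $c$ at the instant the traversal enters $v$ coincides with the value of $c$ in the persistent state produced by applying exactly the root-to-$v$ sequence of updates; and since $A_c$ records every change of $c$, its value is piecewise constant between consecutive entries, so the desired value equals the entry with largest timestamp $\le in(v)$. A short induction on the DFS order nails down the invariant, at which point the $\lg \lg n$ factor that appears in P{\v{a}}tra{\c{s}}cu's original reduction has been replaced by the constant-factor overhead of the $2$-probe rank certificate.
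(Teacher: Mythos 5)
Your proposal is essentially the paper's own proof: a DFS-with-undo traversal of the version tree that records per-cell write histories, followed by replacing the predecessor lookup with the $2$-probe rank certificate of Lemma~\ref{prop:rank_certificate}. The only cosmetic difference is that you stamp individual cell writes, whereas the paper uses discovery/finishing times of version-tree nodes (bounded by $2|\Version|\leq 2m$), which keeps the universe for the rank query at $O(m)$ rather than $O(m t_u)$; both fit within $O(w)$ bits under $w=\Omega(\lg m)$ and give the same $(O(mt_u),w,O(t_q))$ certificate bounds.
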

\begin{proof}
Let $\Version$ be the version tree corresponding to the static, fully persistent version of the dynamic problem. Consider performing a depth-first traversal of $\Version$. When discovering a node $u$ during the traversal, run the sequence of updates in $u$ on a dynamic data structure with update time $t_u$ and query time $t_q$. For each cell that has its contents overwritten, record the changes. When finishing a node $u$ during the traversal, revert all changes made to cell contents. To answer a query $(q,u)$ to the static fully persistent problem, we simulate the query algorithm of the dynamic data structure on the query $q$. Each time the query algorithm requests a memory cell $c$, we need to retrieve its contents as it was precisely after discovering $u$ during the depth-first traversal. If we can do so for all cells $c$, we will answer the query correctly. We call the contents of $c$ at the discovery time of $u$ the \emph{contents of $c$ at $u$}.

A simple, but inefficient, solution would be to store a table for each memory cell $c$, having one entry per discovery and finishing time of a node $u \in \Version$. In each entry, we store the contents of $c$ at that time during the depth-first traversal. Furthermore, we could store an auxiliary table indexed by the nodes of $\Version$. The entry corresponding to a node $u$ stores the discovery time of $u$ (the auxiliary table could be a hash table if nodes $u \in \Version$ are not specified by consecutive integers). Then, given a cell $c$, we could retrieve its contents at $u$ simply by looking up its discovery time in the auxiliary table and then looking up its contents in the table for $c$. The problem with this solution is that the space usage can be as large as $\Omega(m |\Version| t_u)$ if there are a total of $m$ updates in nodes of $\Version$ (the memory may have $m t_u$ cells, and for each, we store a table with $\Omega(|\Version|)$ entries).

Our goal is to reduce the space usage of the simple solution above by storing a much smaller table for each cell $c$. Concretely, consider a cell $c$ and let $\Overwrites_c \subseteq \Version$ denote the subset of nodes whose updates change the contents of $c$. During the depth-first traversal, the contents of $c$ only change at discovery and finishing times of nodes in $\Overwrites_c$. Let $S_c$ denote this set of discovery and finishing times and store a table with entries $T(c,1),\dots,T(c,2|S_c|)$ having one entry per event (discovery or finishing) in $S_c$. The $i$'th entry $T(c,i)$ stores the contents of $c$ immediately after the $i$'th event in $S_c$. Now, given a node $u$ with discovery time $d_u$, let $e \in S_c$ denote the largest time in $S_c$ that is less than or equal to $d_u$. That is, $e$ is the \emph{predecessor} of $d_u$ in $S_c$. The contents of $c$ at $u$ equals the contents of $c$ at time $e$ during the depth-first traversal. See Figure~\ref{fig:dfs} for an illustration.
\begin{figure}[h]
\centering
\begin{tikzpicture}[
level distance=1cm,
level 1/.style={sibling distance=1.75cm},
level 2/.style={sibling distance=1cm}]
\tikzstyle{every node}=[circle,draw]
\node[fill,label=above:{$1,8$}](Root){}
child[missing]{}
child{node[label=above:{$2,7$}]{}
child{node[fill,label=above:{$3,4$}]{}}
child{node[draw, double, label=above:{$5,6$}]{}}};
\end{tikzpicture}
\caption{Assume the two black nodes writes to a cell $c$. The node with discovery time $1$ writes $x$ to $c$ and the node with discovery time $3$ writes $y$. Initially, assume $c$ has contents $z$. During a depth-first traversal, the contents of $c$ change as follows: Initially it is $z$, at time $1$ it becomes $x$, at time $3$ it becomes $y$, at time $4$ it becomes $x$ and at time $8$ it becomes $z$. Assume we are interested in the contents of $c$ at the discovery of the double-circled node (at time $5$). We have $S_c = \{ 1, 3, 4, 8 \}$. The predecessor of $5$ in $S_c$ is $4$ and the contents of $c$ at time $5$ equals the contents at time $4$, namely $x$.}
\label{fig:dfs}
\end{figure}
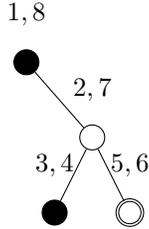
Let $r_e$ denote the \emph{rank} of $e$ in $S_c$, then $T(c,r_e)$ stores those contents. Since $r_e$ equals the rank of $d_u$ in $S_c$, we thus need to solve a rank query on $S_c$. Fortunately, we have already seen in Lemma~\ref{prop:rank_certificate} that such rank queries have $(|S_c|,w,2)$-certificates for $w > \lg U$ where $U$ denotes the largest event time in $S_c$. That is, we can non-deterministically retrieve the contents of $c$ at $u$ in $O(1)$ time using $O(|S_c|)$ space for memory cell $c$. Summing over all memory cells, the total space usage is $O(\sum_c |S_c|) = O(m t_u)$ (there are $m$ updates and each may change up to $t_u$ cells) and the total query time needed to simulate a query is $O(t_q)$ ($O(1)$ time per simulated cell access). Since the largest event time is bounded by $2|\Version| \leq 2m$, we require $w = \Omega(\lg m)$. We also need the auxiliary table mapping nodes to discovery times. Using a hash table, this can be done in worst case $O(1)$ time and $O(|\Version|) = O(m)$ space (using e.g. Cuckoo Hashing~\cite{cuckoo}). In summary, we have given $(O(mt_u), w, O(t_q))$-certificates for $w = \Omega(\lg m)$.
\end{proof}

As already discussed in the introduction, this result can be used to prove lower bounds on dynamic problems. Concretely, a lower bound on the certificates of the static, fully persistent version of a problem, gives a lower bounds for the dynamic version as well. The next section illustrates this for the dynamic marked ancestor problem.

\section{Lower Bound for Dynamic Marked Ancestor}
\label{sec:dyn_marked_anc}
The reduction we give from reachability in the butterfly graph to dynamic marked ancestor is due to P{\v{a}}tra{\c{s}}cu, only that we have improved one step of the reduction. More concretely, P{\v{a}}tra{\c{s}}cu reduces reachability in the butterfly graph to static fully persistent marked ancestor and then from there to dynamic marked ancestor. The last step of his reduction losses a $\lg \lg n$, whereas we have seen in Theorem~\ref{thm:fully_persistent_certificate} that this step can be performed at an $O(1)$ factor change in query time if we allow non-determinism. For completeness, we have chosen to include P{\v{a}}tra{\c{s}}cu's reduction from reachability in the butterfly graph to static fully persistent marked ancestor.

Recall that in the marked ancestor problem, the input is a rooted tree $\Tree$ where updates either mark or unmark nodes. A query asks whether a given node has a marked ancestor. In the static fully persistent version of the problem, we are given as input a version tree $\VersionTree$ with such updates. Queries are pairs $(u,v)$ with $u \in \Tree$ and $v \in \VersionTree$. Such a query must output whether $u$ has a marked ancestor if performing precisely the updates in the nodes on the path from the root of $\VersionTree$ to $v$.

To present P{\v{a}}tra{\c{s}}cu's reduction, we start by describing the butterfly graph. A butterfly is a graph specified by a degree $b$ and depth $d$. The graph has $d+1$ layers, each of which has $b^d$ vertices. Viewing the vertices of layer $i$ as vectors in $[b]^d$ (or numbers in base $b$), there is an edge from a vertex in layer $i$ to a vertex in layer $i+1$, precisely if their vectors are equal in all coordinates, except possibly coordinate $i$. The vertices in layer $0$ are called \emph{sources} and the vertices in layer $d$ are called \emph{sinks}. In the problem of reachability in the butterfly graph, one is given as input a subset of the edges in a butterfly graph. A query is specified by a source-sink pair, and the goal is to return whether the source can reach the sink. Notice that in the full butterfly graph, there is precisely one path from each source to each sink, namely the path that starts in the source, and in layer $i$, it takes the edge leading to the neighbouring vertex in layer $i+1$ whose $i$'th coordinate equals the $i$'th coordinate of the sink. The path thus "morphs" the coordinates of the source into those of the sink, one coordinate at a time. For a reachability query on a subgraph, we thus have to determine whether at least one edge on this unique path is missing or not.

An example of a subgraph of a butterfly with depth 2 and degree 2 is shown in Figure \ref{fig:reduction}(a).

P{\v{a}}tra{\c{s}}cu gave the following reduction \cite{p2011unifying}:

\begin{figure*}
\begin{subfigure}[t]{0.45\textwidth}
\centering
\begin{tikzpicture}
\tikzset{dot/.style={fill=black,circle,scale=0.6}}
\node[dot] at (0,2){};
\node[dot] at (1,2){};
\node[dot] at (2,2){};
\node[dot] at (3,2){};
\node[dot,label=above:$s_1$] at (0,4){};
\node[dot,label=above:$s_2$] at (1,4){};
\node[dot,label=above:$s_3$] at (2,4){};
\node[dot,label=above:$s_4$] at (3,4){};
\node[dot,label=below:$t_1$] at (0,0){};
\node[dot,label=below:$t_2$] at (1,0){};
\node[dot,label=below:$t_3$] at (2,0){};
\node[dot,label=below:$t_4$] at (3,0){};
\draw(0,2)--(1,4)--(1,2)--(3,0)--(3,4)--(2,2)--(2,0)--(0,2)--(0,4);
\draw(2,4)--(3,2);
\draw(2,2)--(0,0);
\draw[loosely dashed](0,4)--node[left]{$e_1$}(1,2)
--node[left]{$e_4$}(1,0)--node[left]{$e_5$}(3,2);
\draw[loosely dashed](2,2)--node[left]{$e_2$}(2,4);
\draw[loosely dashed](0,0)--node[left]{$e_3$}(0,2);
\end{tikzpicture}
\caption{A subgraph $G'$ of a butterfly with degree 2 and depth 2. Dashed lines indicate missing edges which are named $e_1, ..., e_5$}
\vspace{0.5cm}
\end{subfigure}
\hfill
\begin{subfigure}[t]{0.45\textwidth}
\centering
\begin{tikzpicture}
\tikzset{dot/.style={fill=black,circle,scale=0.6}}
\node[dot] at (0,2){};
\node[dot] at (1,2){};
\node[] at (2,2){};
\node[] at (3,2){};
\node[dot,label=above:$s_1$] at (0,4){};
\node[] at (1,4){};
\node[] at (2,4){};
\node[] at (3,4){};
\node[dot,label=below:$t_1$] at (0,0){};
\node[dot,label=below:$t_2$] at (1,0){};
\node[dot,label=below:$t_3$] at (2,0){};
\node[dot,label=below:$t_4$] at (3,0){};
\draw(1,2)--(3,0);
\draw(0,2)--(0,4);
\draw(0,2)--(2,0);
\draw[loosely dashed](0,4)--node[left]{$e_1$}(1,2)
--node[left]{$e_4$}(1,0);
\draw[loosely dashed](0,0)--node[left]{$e_3$}(0,2);
\end{tikzpicture}
\caption{The tree $S_1'$ rooted at source $s_1$.}
\vspace{0.5cm}
\end{subfigure}
\begin{subfigure}[t]{0.5\textwidth}
\centering
\begin{tikzpicture}[
level distance=1.5cm,
level 1/.style={sibling distance=4cm},
level 2/.style={sibling distance=2cm}]
\tikzstyle{every node}=[ellipse,draw,
minimum height=0.8cm,minimum width=1.5cm]
\node(Root){}
child{node {$e_3, e_4$} 
child{node[label=below:$s_1$]{$e_1$}}
child{node[label=below:$s_2$]{}}}
child{node {$e_5$}
child{node[label=below:$s_3$]{$e_2$}}
child{node[label=below:$s_4$]{}}};
\end{tikzpicture}
\caption{The corresponding version tree. Consider the left child of the root. The sources of the butterfly that are descendants of this node are $s_1$ and $s_2$. There are four edges that can be reached by precisely $s_1$ and $s_2$ in the butterfly. Of these $e_3$ and $e_4$ are missing.}
\end{subfigure}
\hfill
\begin{subfigure}[t]{0.45\textwidth}
\centering
\begin{tikzpicture}[
level distance=1cm,
level 1/.style={sibling distance=1.75cm},
level 2/.style={sibling distance=1cm}]
\tikzstyle{every node}=[circle,draw]
\node(Root){}
child{node{} 
child[dashed]{node[fill,label=below:$t_1$]{}}
child{node[label=below:$t_3$]{}}}
child[dashed]{node[fill]{}
child[dashed]{node[fill,label=below:$t_2$]{}}
child[solid]{node[label=below:$t_4$]{}}};
\end{tikzpicture}
\caption{The marked ancestor tree after the updates specified by $s_1$ of the version tree. The dashed edges correspond to the missing edges $e_1$, $e_3$ and $e_4$. The corresponding lower endpoints have been marked.}
\end{subfigure}
\caption{Illustration of reduction. As an example, consider the edge $e_1$ in (a). It starts at the node with vector $v_\ell = (0,0)$ at layer $i=0$ and goes to the node with vector $v_u=(1,0)$ at layer $i+1=1$, see (a). It is represented by an update in the version tree node with index $\sum_{k=0}^{d-i-1} b^k v_{\ell}[i+k]=\sum_{k=0}^{2-0-1} b^k v_{\ell}[0+k] = 0 \cdot b^0 + 0 \cdot b^1=0$ in layer $d-i=2-0=2$, see (c). It marks the node at index $\sum_{k=0}^{i} b^{i-k} v_u[k]=\sum_{k=0}^{0} b^{0-k} v_u[k] = 1\cdot b^0=1$ in layer $i+1 = 1$ of $\Tree$, see (d).}
\label{fig:reduction}
\end{figure*}
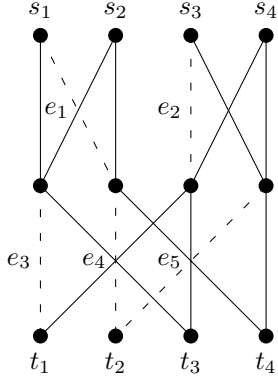
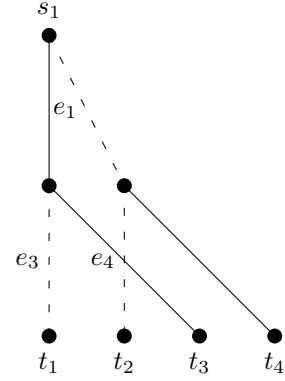
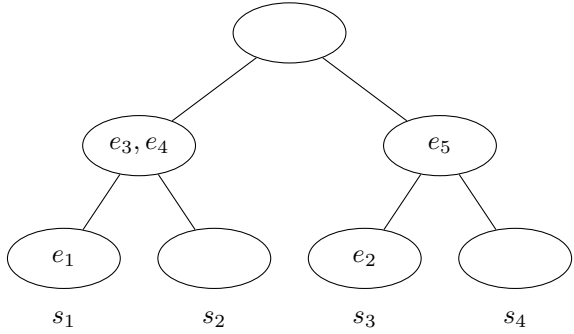
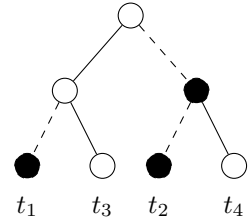

\begin{reduction}
\label{red:reachability_to_marked_ancestor}
Let $G$ be a butterfly with $m$ edges.
The reachability problem on subgraphs of $G$ reduces to the static fully persistent version of the marked ancestor problem with $O(m)$ updates.
\end{reduction}
\begin{proof}
We are given as input a subgraph $G'$ of the butterfly $G$ with degree $b$ and depth $d$. Define for each source $s_i$ the tree $S_i$ consisting of everything reachable from $s_i$ in $G$. Similarly, define $S_i'$ as the subgraph of $S_i$, where only edges from $G'$ remain. The tree $S_i$ has degree $b$ and depth $d$ and the leaves correspond to the sinks in $G$. A source $s_i$ can reach a sink $t_j$ in $G'$ precisely if there is a path from the root of $S_i'$ to the leaf corresponding to $t_j$. See Figure~\ref{fig:reduction}(b). The first idea in the reduction, is that we will use markings on a tree $\Tree$ of degree $b$ and depth $d$ to denote missing edges in $S_i'$. Concretely, if some edge from $S_i$ is missing in $S_i'$, we will mark the lower endpoint of that edge in $\Tree$. Then there is a path from $s_i$ to $t_j$ in $G'$ if and only if there are no marked vertices on the path from the root of $\Tree$ to the leaf corresponding to $t_j$.

The second idea is to use full persistence to represent all the different trees $S_i'$ as different versions of the same marked ancestor tree $\Tree$. Thus we have the marked ancestor tree $\Tree$ and a version tree $\VersionTree$, each of which is a complete tree with degree $b$ and depth $d$. The leaves of $\VersionTree$ correspond to the sources of $G$ and the leaves of $\Tree$ correspond to the sinks. We will assign updates to the version tree such that, if one performs the updates on the path from the root of $\VersionTree$ to a leaf corresponding to some source $s_i$, then $\Tree$ looks exactly like $S_i'$ (markings at lower endpoints of edges missing in $S_i'$). See Figure~\ref{fig:reduction}(d) for an illustration. 

We need to represent missing edges in $G'$ by mark operation in nodes of $\VersionTree$. Consider a missing edge $e=(\ell,u)$ from layer $i$ to $i+1$ in $G$. Let $v_a \in [b]^d$ denote the vector representing the node $\ell$ (its index into layer $i$, written in base $b$, with $v_a[0]$ being the least significant digit). By definition of the butterfly, precisely sources $s_i$ with vectors of the form $** \cdots **v_{\ell}[i] v_\ell[i+1] \cdots v_\ell[d-1]$ can reach $\ell$, where $*$ can be any value in $[b]$. These sources are precisely the set of all leaves in the subtree rooted at the node of index $\sum_{k=0}^{d-i-1} b^k v_\ell[i+k]$ into layer $d-i$ of $\VersionTree$. We will thus place the mark operation in that node, ensuring that the operation will be performed precisely when we query with a source that can reach $\ell$.

We now need to determine which edge of $\Tree$ to mark, or technically, which lower endpoint of an edge to mark. For this, observe that if $s_j$ is any source that can reach $\ell$, then the path from $s_j$ to $\ell$ has the same form regardless of $s_j$: At layer $i$, take the step to the $v_\ell[i]$'th child/neighbour at layer $i+1$. That is, for any $s_j$ that can reach $\ell$, we have that $e$ is exactly the edge from the node with index $\sum_{k=0}^{i-1} b^{i-1-k} v_\ell[k]$ in layer $i$ of $S_j'$ to the node with index $\sum_{k=0}^{i} b^{i-k} v_u[k]$ in layer $i+1$ of $S_j'$. To summarize, we represent the missing edge $e=(\ell,u)$ by marking the node of index $\sum_{k=0}^{i} b^{i-k} v_u[k]$ in layer $i+1$ of $\Tree$ using an update operation in the node of index $\sum_{k=0}^{d-i-1} b^k v_\ell[i+k]$ into layer $d-i$ of $\VersionTree$. See Figure~\ref{fig:reduction} for an illustration.

As already described above, we can now answer a reachability query from a source $s_i$ to a sink $t_j$ in $G'$ by asking the marked ancestor query using the leaf corresponding to $t_j$ in $\Tree$ on the version node in $\VersionTree$ corresponding to $s_i$.
\end{proof}

Now that we have established the reduction from reachability in the butterfly to static fully persistent marked ancestor, we can use the following lower bound from \cite{wang2014certificates}:

\begin{theorem}
\label{thm:reachability_oracle_lower_bound}
If reachability in subgraphs of a butterfly with $n$ edges has $(s, w, t)$-certificates for $s = \Omega(n)$, then $t = \Omega(\lg n / \lg \frac{sw}{n})$.
\end{theorem}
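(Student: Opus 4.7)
The plan is to recover P\v{a}tra\c{s}cu's LSD-based lower bound for reachability in the butterfly, upgraded to the non-deterministic certificate setting by Wang and Yin~\cite{wang2014certificates}. The seed lower bound is for Lopsided Set Disjointness (LSD): Alice holds $S \subseteq [N]$ with $|S|=k$, Bob holds $T \subseteq [N]$ with $|T|=N/B$, and the parties must decide whether $S\cap T=\emptyset$. Miltersen et al.~\cite{milt:asym} proved that any protocol forces either Alice to send $\Omega(k \lg B)$ bits or Bob to send $\Omega(N^{1-\varepsilon})$ bits; the analogous bound for non-deterministic protocols, where the two parties jointly verify a shared witness against their own inputs, is what I would take as a black box from Wang and Yin.

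Next I would set up P\v{a}tra\c{s}cu's embedding of LSD into butterfly reachability. Pick the butterfly degree $b$ and depth $d$ so that $n \asymp d b^{d+1}$, set $B=b^d$, and identify the universe $[N]$ with the edges of the butterfly. Alice's set $S$ specifies which edges are deleted from the subgraph, while Bob's set $T$ encodes a structured batch of source--sink queries that exploits the layered product structure of the butterfly so that each queried path uses exactly one candidate edge per layer. The crucial property is that $S\cap T=\emptyset$ if and only if every source--sink pair that Bob queries remains connected, reducing LSD to a batch of butterfly reachability queries.

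Third, I would convert a $(s,w,t)$-certificate data structure for reachability into a non-deterministic LSD protocol. Alice, holding the whole subgraph, locally computes the certificate table $T(\cdot)$ of Definition~\ref{def:certificate}. For each of Bob's queries the prover non-deterministically selects the set $P$ of $t$ cells certifying the answer and broadcasts the pair $(P, T_d(P))$: this contributes $O(t \lg s)$ bits of cell indices and $O(tw)$ bits of cell contents. Alice attests that the contents agree with her local table and Bob runs the verifier $V$ from the definition. By charging the index bits to Bob and content bits to Alice, and summing over the entire batch, the non-deterministic LSD bound yields an inequality of the form $t(w+\lg s)\cdot(\text{batch size}) \geq \Omega(k \lg B)$.

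Finally, I would balance parameters. With $s=\Omega(n)$ and $n \asymp d b^{d+1}$ we get $\lg(sw/n) = \Theta(\lg b)$; choosing $d$ as large as the inequality permits then yields $d = \Theta(\lg n / \lg(sw/n))$ and hence $t = \Omega(\lg n / \lg(sw/n))$, as claimed. The main obstacle I foresee lies not in the parameter balancing but in the non-deterministic protocol itself: the witness must be honestly checkable by Alice and Bob against their private inputs, and the possibility that $V$ returns $\bot$ must not afford the prover a loophole when the true answer is ``disjoint''. This is precisely the delicate point that Wang and Yin's theorem addresses, so once their LSD bound is invoked the remaining argument is essentially P\v{a}tra\c{s}cu's.
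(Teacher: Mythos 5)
The paper does not prove this theorem; it cites it directly from Wang and Yin~\cite{wang2014certificates} --- the sentence immediately preceding the statement reads ``we can use the following lower bound from \cite{wang2014certificates}.'' So there is no in-paper proof to compare yours against, and the theorem is deliberately used as a black box.

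Evaluated on its own terms, your sketch is a reasonable outline of how such a proof would go: seed with LSD, use P\v{a}tra\c{s}cu's embedding of LSD into butterfly reachability, convert an $(s,w,t)$-certificate into a non-deterministic communication protocol by having the prover broadcast the $t$ indices and contents, and then balance $b$ and $d$ to extract $t = \Omega(\lg n / \lg(sw/n))$. You also correctly identify the crux: the verifier's $\bot$ outcome must not give the prover a loophole, and the LSD bound itself must hold against the one-sided non-determinism that the reduction actually produces. However, your proposal defers exactly that crux to a black box (``the analogous bound for non-deterministic protocols\dots is what I would take as a black box from Wang and Yin''), which is circular in the sense that Theorem~\ref{thm:reachability_oracle_lower_bound} \emph{is} Wang and Yin's theorem; once you invoke their result in this form there is nothing left to prove. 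A self-contained argument would need to actually establish the relevant non-deterministic LSD bound (for instance, by a covering/fooling-set argument and by noting the reduction only needs the ``disjoint'' side, where non-determinism provably does not help), or to carefully show that the deterministic asymmetric LSD bound transfers through the certificate-to-protocol step. As written, the plan stops precisely where the work begins.
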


Let us combine it all to derive the lower bound for dynamic marked ancestor. Consider a solution to dynamic marked ancestor with query time $t_q$ and update time $t_u$. Given a subgraph $G'$ of a butterfly $G$ with $n$ edges, we can use Reduction \ref{red:reachability_to_marked_ancestor} to solve reachability in $G'$ with $O(n)$ updates to the static fully persistent marked ancestor problem. Using Theorem \ref{thm:fully_persistent_certificate}, the dynamic marked ancestor solution gives us $(O(n t_u), w, O(t_q))$-certificates for this whenever $w = \Omega(\lg n)$. Theorem~\ref{thm:reachability_oracle_lower_bound} finally gives us that $t = \Omega(\lg n/\lg(t_u w))$ for any $w = \Omega(\lg n)$ and we conclude for any cell size $w$:
\begin{corollary}
\label{cor:marked_ancestor_lower_bound}
The dynamic marked ancestor problem requires query time $t_q = \Omega(\lg n /\lg (t_u w \lg n))$ on $n$-vertex trees.
\end{corollary}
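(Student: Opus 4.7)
The plan is simply to compose the three ingredients that precede the statement: Reduction~\ref{red:reachability_to_marked_ancestor} from reachability in the butterfly to static fully persistent marked ancestor, Theorem~\ref{thm:fully_persistent_certificate} turning any dynamic solution into certificates for its fully persistent version, and the lower bound of Theorem~\ref{thm:reachability_oracle_lower_bound} for certificates for butterfly reachability. Each arrow is already provided, so the only real work is chaining them cleanly and then removing the side condition $w = \Omega(\lg n)$ hidden in Theorem~\ref{thm:fully_persistent_certificate}.

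Concretely, I assume a dynamic marked ancestor data structure with query time $t_q$ and update time $t_u$ on cells of $w$ bits, and first handle the case $w = \Omega(\lg n)$. For an $n$-edge butterfly subgraph, Reduction~\ref{red:reachability_to_marked_ancestor} produces a static fully persistent marked ancestor instance with $m = O(n)$ updates on a tree of $O(n)$ nodes. Since $w = \Omega(\lg m)$, Theorem~\ref{thm:fully_persistent_certificate} converts the assumed dynamic solution into $(O(n t_u), w, O(t_q))$-certificates for that static fully persistent problem, and hence for reachability in the butterfly. Because $n t_u \geq n$, the hypothesis $s = \Omega(n)$ of Theorem~\ref{thm:reachability_oracle_lower_bound} is automatic, and plugging $s = O(n t_u)$, $t = O(t_q)$ into that theorem gives $t_q = \Omega(\lg n / \lg(t_u w))$, as claimed in this regime.

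To remove the side condition and cover arbitrary $w$, I use a standard padding argument: any cell probe data structure with word size $w$ is also a valid data structure with word size $w' = \max(w, \lg n)$, obtained by storing each original word in the low $w$ bits of a wider cell and ignoring the rest; the number of probes is unchanged. Applying the previous paragraph at word size $w'$ yields $t_q = \Omega(\lg n / \lg(t_u w'))$, which in both cases $w \geq \lg n$ and $w < \lg n$ absorbs into the uniform bound $t_q = \Omega(\lg n / \lg(t_u w \lg n))$.

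I do not expect a real obstacle here: the substantive content — and in particular the avoidance of the $\lg \lg n$ loss — lives entirely inside Theorem~\ref{thm:fully_persistent_certificate}, while the composition with Reduction~\ref{red:reachability_to_marked_ancestor} and Theorem~\ref{thm:reachability_oracle_lower_bound} and the padding step for small $w$ are routine. The only small care points are verifying that the linear space hypothesis of Theorem~\ref{thm:reachability_oracle_lower_bound} comes for free from $s = O(n t_u)$ and tracking how the $\lg(t_u w)$ and $\lg(t_u \lg n)$ bounds merge into the stated $\lg(t_u w \lg n)$.
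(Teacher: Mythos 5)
Your proposal is correct and follows essentially the same route as the paper: it chains Reduction~\ref{red:reachability_to_marked_ancestor}, Theorem~\ref{thm:fully_persistent_certificate} and Theorem~\ref{thm:reachability_oracle_lower_bound} to obtain $t_q = \Omega(\lg n/\lg(t_u w))$ whenever $w = \Omega(\lg n)$, and then passes to arbitrary cell size to get $t_q = \Omega(\lg n/\lg(t_u w \lg n))$. The only difference is presentational: the paper states the final step for general $w$ without detail, while you spell out the padding to $w' = \max(w,\lg n)$ explicitly, which is exactly the intended argument.
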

This lower bound matches the original bound from \cite{alstrup1998marked}.
\section{Acknowledgements}This paper was written as part of the Talent Track Program\footnote{http://studerende.au.dk/studier/fagportaler/datalogi/ undervisning/talentforloeb/} for Bachelor students at the Department of Computer Science at Aarhus Univerity, Denmark.

\bibliography{bib}{}

\begin{thebibliography}{10}

\bibitem{alstrup1998marked}
Stephen Alstrup, Thore Husfeldt, and Theis Rauhe.
\newblock Marked ancestor problems.
\newblock In {\em Proc. 39th IEEE Symposium on Foundations of Computer
  Science}, pages 534--543. IEEE, 1998.

\bibitem{brodal:skyline}
Gerth~St{\o}lting Brodal and Kasper~Green Larsen.
\newblock Optimal planar orthogonal skyline counting queries.
\newblock In {\em Proc. 14th Scandinavian Workshop on Algorithms Theory}, pages
  110--121, 2014.

\bibitem{Fredman:chrono}
M.~Fredman and M.~Saks.
\newblock The cell probe complexity of dynamic data structures.
\newblock In {\em Proc 21st ACM Symposium on Theory of Computation}, pages
  345--354, 1989.

\bibitem{greve:mode}
Mark Greve, Allan~Gr{\o}nlund J{\o}rgensen, Kasper~Dalgaard Larsen, and Jakob
  Truelsen.
\newblock Cell probe lower bounds and approximations for range mode.
\newblock In {\em Proc. 37th International Colloquium on Automata, Languages,
  and Programming}, pages 605--616, 2010.

\bibitem{towardsrr}
Allan Gr{\o}nlund and Kasper~Green Larsen.
\newblock Towards tight lower bounds for range reporting on the {RAM}.
\newblock In {\em Proc. 43rd International Colloquium on Automata, Languages,
  and Programming}, volume~55 of {\em LIPIcs}, pages 92:1--92:12. Schloss
  Dagstuhl - Leibniz-Zentrum f{\"{u}}r Informatik, 2016.

\bibitem{larsen:dynamic_count}
Kasper~Green Larsen.
\newblock The cell probe complexity of dynamic range counting.
\newblock In {\em Proc. 44th ACM Symposium on Theory of Computation}, pages
  85--94, 2012.

\bibitem{larsen:staticloga}
Kasper~Green Larsen.
\newblock Higher cell probe lower bounds for evaluating polynomials.
\newblock In {\em Proc. 53rd IEEE Symposium on Foundations of Computer
  Science}, pages 293--301, 2012.

\bibitem{larsen:crossing}
Kasper~Green Larsen, Omri Weinstein, and Huacheng Yu.
\newblock Crossing the logarithmic barrier for dynamic boolean data structure
  lower bounds.
\newblock In {\em Proc. 50th ACM Symposium on Theory of Computation}, pages
  978--989. {ACM}, 2018.

\bibitem{milt:asym}
Peter~Bro Miltersen, Noam Nisan, Shmuel Safra, and Avi Wigderson.
\newblock On data structures and asymmetric communication complexity.
\newblock {\em Journal of Computer and System Sciences}, 57(1):37--49, 1998.

\bibitem{cuckoo}
Rasmus Pagh and Flemming~Friche Rodler.
\newblock Cuckoo hashing.
\newblock {\em Journal of Algorithms}, 51(2):122--144, 2004.

\bibitem{pani:metric}
Rina Panigrahy, Kunal Talwar, and Udi Wieder.
\newblock Lower bounds on near neighbor search via metric expansion.
\newblock In {\em Proc. 51st IEEE Symposium on Foundations of Computer
  Science}, pages 805--814, 2010.

\bibitem{p2011unifying}
Mihai P{\v{a}}tra{\c{s}}cu.
\newblock Unifying the landscape of cell-probe lower bounds.
\newblock {\em SIAM Journal on Computing}, 40(3):827--847, 2011.

\bibitem{patrascu06pred}
Mihai P{\v a}tra{\c s}cu and Mikkel Thorup.
\newblock Time-space trade-offs for predecessor search.
\newblock In {\em Proc. 38th ACM Symposium on Theory of Computation}, pages
  232--240, 2006.

\bibitem{patrascu07randpred}
Mihai P\v{a}tra\c{s}cu and Mikkel Thorup.
\newblock Randomization does not help searching predecessors.
\newblock In {\em Proc. 18th ACM/SIAM Symposium on Discrete Algorithms}, pages
  555--564, 2007.

\bibitem{verbin:dist}
Christian Sommer, Elad Verbin, and Wei Yu.
\newblock Distance oracles for sparse graphs.
\newblock In {\em Proc. 50th IEEE Symposium on Foundations of Computer
  Science}, pages 703--712, 2009.

\bibitem{wang2014certificates}
Yaoyu Wang and Yitong Yin.
\newblock Certificates in data structures.
\newblock In {\em International Colloquium on Automata, Languages, and
  Programming}, pages 1039--1050. Springer, 2014.

\bibitem{weinstein:fourparty}
Omri Weinstein and Huacheng Yu.
\newblock Amortized dynamic cell-probe lower bounds from four-party
  communication.
\newblock In {\em Proc. 57th IEEE Symposium on Foundations of Computer
  Science}, pages 305--314. {IEEE} Computer Society, 2016.

\bibitem{yao:cellprobe}
Andrew Chi~Chih Yao.
\newblock Should tables be sorted?
\newblock {\em Journal of the ACM}, 28(3):615--628, 1981.

\bibitem{yin2010cell}
Yitong Yin.
\newblock Cell-probe proofs.
\newblock {\em ACM Transactions on Computation Theory (TOCT)}, 2(1):1--17,
  2010.

\end{thebibliography}
\bibliographystyle{plain}

\end{document}